\newtheorem*{mydef}{Definition}
\newtheorem{mythe}{Theorem}
\begin{document}

\title{Optimal state pairs for non-Markovian quantum dynamics}

\author{Steffen Wi{\ss}mann}
\email{steffen.wissmann@jupiter.uni-freiburg.de}
\affiliation{Physikalisches Institut, Universit\"at Freiburg, 
Hermann-Herder-Stra{\ss}e 3, D-79104 Freiburg, Germany}

\author{Antti Karlsson}
\email{aspkar@utu.fi}
\affiliation{Turku Centre for Quantum Physics, Department of Physics and 
Astronomy, University of Turku, FI-20014 Turun yliopisto, Finland}

\author{Elsi-Mari Laine}
\affiliation{Turku Centre for Quantum Physics, Department of Physics and 
Astronomy, University of Turku, FI-20014 Turun yliopisto, Finland}

\author{Jyrki Piilo}
\affiliation{Turku Centre for Quantum Physics, Department of Physics and 
Astronomy, University of Turku, FI-20014 Turun yliopisto, Finland}

\author{Heinz-Peter Breuer}
\affiliation{Physikalisches Institut, Universit\"at Freiburg, 
Hermann-Herder-Stra{\ss}e 3, D-79104 Freiburg, Germany}

\date{\today}

\begin{abstract}
We study a recently proposed measure for the quantification of quantum 
non-Markovianity in the dynamics of open systems which is based on the 
exchange of information between the open system and its environment. This 
measure relates the degree of memory effects to certain optimal initial state pairs 
featuring a maximal flow of information from the environment back to the
open system. We rigorously prove that the states of these optimal pairs must lie on 
the boundary of the space of physical states and that they must be
orthogonal. This implies that quantum memory effects are maximal for states 
which are initially distinguishable with certainty, having a maximal information 
content. Moreover, we construct an explicit example which demonstrates that 
optimal quantum states need not be pure states.
\end{abstract}

\pacs{03.65.Yz, 03.65.Ta, 03.67.Pp}

\maketitle

\section{Introduction}\label{sec:intro}

Many quantum physical systems are modelled approximately as closed systems. 
However, in most practical cases the interaction of a given system with its 
environment can not be neglected and this makes it necessary to use the theory of 
open quantum systems~\cite{BREUERBOOK}.
In the past, it has been very common to neglect quantum memory effects and 
to resort to a Markovian approximation. In many applications the time evolution of 
the open system is described by a quantum dynamical semigroup represented by
a generator of the Lindblad-Gorini-Kossakowski-Sudarshan 
form~\cite{Gorini,Lindblad}.

However, when the environment has a nontrivial structure, one has to account for 
memory effects influencing the dynamics of the open system. During recent 
years, there has been significant conceptual, theoretical, and experimental 
progress dealing with non-Markovian processes~\cite
{Piilo08,Vacchini08,Wolf,Lidar2009,Measure,MeasurePaper,Fisher,RHP, 
Kossakowski2010,Paternostro,NJP,Maniscalco,Liu,Tang,Haikka,HRP,SpIssue}. 
In particular, the very definition of non-Markovianity 
and quantification of quantum memory effects in the dynamics of open systems 
has received a lot of interest. Several measures for quantum non-Markovianity 
have been proposed which are based on different mathematical and physical 
concepts~\cite{Wolf,Measure,MeasurePaper,Fisher,RHP}.

In the present paper we investigate the non-Markovianity measure proposed in
\cite{Measure} which expresses the degree of memory effects in terms of the 
amount of information exchanged between the open system and its environment. 
Within this approach the degree of non-Markovianity is connected to certain 
optimal initial pairs of quantum states, leading to a maximal flow of information 
from the environment to the open system. Here, we will study the mathematical 
and physical properties of such optimal state pairs.

Throughout this paper $\mathcal{H}$ refers to the Hilbert space of the open 
quantum system and the corresponding set of physical states, i.e.~the set of 
positive trace class operators $\rho$ with unit trace is denoted by 
$\mathcal{S}(\mathcal{H})$. We assume that the dynamics of the open quantum 
system can be represented in terms of a one-parameter family 
\begin{equation} \label{family}
 \Phi=\{\Phi_t \mid 0 \leq t \leq T\}
\end{equation} 
of completely positive and trace preserving (CPT) linear maps $\Phi_t$, where 
$\Phi_0$ is equal to the identity map \cite{BREUERBOOK}. We will use the 
quantity
\begin{equation}\label{eq:measure}
 \mathcal{N}(\Phi)\equiv \max_{\rho_{1,2}\in \mathcal{S}(\mathcal{H})} 
 \int_{\sigma>0} dt ~\sigma(t,\rho_1,\rho_2)
\end{equation}
as a measure for the degree of memory effects in the open system dynamics
\cite{Measure,MeasurePaper}. Here, 
\begin{equation}\label{eq:def-sigma}
 \sigma(t,\rho_1,\rho_2) \equiv \frac{d}{dt}
 \mathcal{D}\bigl(\Phi_t(\rho_1),\Phi_t(\rho_2)\bigr),
\end{equation}
where $\mathcal{D}$ denotes the trace distance \cite{Hayashi,Nielsen} defined by 
\begin{equation}
 \mathcal{D}(\rho_1,\rho_2) = \frac{1}{2} {\mathrm{Tr}}|\rho_1-\rho_2|.
\end{equation} 
By definition, the non-Markovianity measure $\mathcal{N}(\Phi)$ is a positive 
functional of the process $\Phi$. It is zero if and only if the trace distance between 
any pair of initial states is a monotonically decreasing function of time $t$ 
signifying a continuous loss of information from the open system to its 
environment. A nonzero value for the measure means that there is an initial state 
pair for which the trace distance increases over a certain time interval which can 
be interpreted as a flow of information from the environment back to the open 
system implying the presence of memory effects. The time integral in 
\eqref{eq:measure} determines the total backflow of information for a certain pair 
of initial states $\rho_1$ and $\rho_2$. The quantity $\mathcal{N}(\Phi)$ is then 
found by taking the maximum over all initial state pairs, i.e. $\mathcal{N}(\Phi)$ 
represents the maximal possible backflow of information. 

A pair of states $\rho_1,\rho_2 \in \mathcal{S}(\mathcal{H})$ is said to be an 
{\textit{optimal state pair}} if the maximum in Eq.~\eqref{eq:measure} is attained 
for this pair. In the present paper we are interested in the mathematical and 
physical features of the maximization procedure involved in expression 
\eqref{eq:measure}. In particular, we will obtain general statements about the 
physical properties of optimal state pairs which give rise to the maximal possible 
degree of memory effects in their dynamics. To this end, we first show in 
Sec.~\ref{sec:boundary} that optimal pairs of states must lie on the boundary of
the state space $\mathcal{S}(\mathcal{H})$. We then proceed in
Sec.~\ref{Orthogonality} to demonstrate that the states of any optimal pair must be
orthogonal. This is our central result which is physically very plausible since
it implies that an optimal pair must have the largest possible initial trace
distance $\mathcal{D}(\rho_1,\rho_2)=1$. Thus, the maximal flow of information 
from the environment back to the open system emerges if the initial state pair is 
distinguishable with certainty, i.e.~has a maximal information content. 
We discuss specific features of the case of infinite-dimensional state spaces in 
Sec.~\ref{Infinite}, and present in Sec.~\ref{translatable} an alternative proof for 
the orthogonality of optimal state pairs which employs the joint translatability of
nonorthogonal states.

In the simple case of a qubit, the orthogonality of the optimal pair implies that 
both states of the pair must be pure. In Sec.~\ref{sec:counterex1} we
study the question of whether this statement holds true for higher dimensional
systems. The answer is negative. In fact, we will construct in this
section an example for the dynamics of a three-level system ($\Lambda$-system)
for which the optimal pair is not a pure state pair. Thus, optimal state pairs
exhibiting maximal backflow of information can indeed be mixed states. Finally, we
draw some conclusions from our results in Sec.~\ref{sec:conclusion}.

\section{Characterizing optimal pairs of quantum states}\label{sec:optimalpair}

\subsection{Restriction to the boundary of the state space}\label{sec:boundary}

The state space $\mathcal{S}(\mathcal{H})$ of an open quantum system is
given by the set of positive trace class operators on its Hilbert space 
$\mathcal{H}$ with trace one \cite{Heinosaari},
\begin{equation}
 \mathcal{S}(\mathcal{H}) = \left\{ \rho \mid \rho \geq 0, \;
 \mathrm{Tr} \rho = 1 \right\}.
\end{equation} 
On the basis of the convex structure of the state space one can define the 
boundary $\partial\mathcal{S}(\mathcal{H})$ of $\mathcal{S}(\mathcal{H})$
as follows. A point $\rho \in \mathcal{S}(\mathcal{H})$ is defined to be an interior point if and only if for all $\sigma \in \mathcal{S}(\mathcal{H})$ there
is a real number $\lambda > 1$ such that
\begin{equation} \label{eq:defbound}
 (1-\lambda)\sigma + \lambda\rho \in \mathcal{S}(\mathcal{H}). 
\end{equation} 
Denoting the set of all interior points by $\tilde{\mathcal{S}}(\mathcal{H})$, 
one defines the boundary of the state space by
\begin{equation}
 \partial\mathcal{S}(\mathcal{H}) = 
 \mathcal{S}(\mathcal{H}) \setminus \tilde{\mathcal{S}}(\mathcal{H}). 
\end{equation} 
Thus, $\rho \in \partial\mathcal{S}(\mathcal{H})$ if and only if there exists 
$\sigma \in \mathcal{S}(\mathcal{H})$ such that for all $\lambda > 1$ the
operator $(1-\lambda)\sigma + \lambda\rho$ does
not belong to the state space $\mathcal{S}(\mathcal{H})$. 

It is easy to show that a density matrix $\rho \in \mathcal{S}(\mathcal{H})$ belongs
to the boundary $\partial\mathcal{S}(\mathcal{H})$ of the state space if it has
a zero eigenvalue. In fact, let $|\varphi\rangle$ be a normalized eigenvector 
of $\rho$ with eigenvalue zero, and $P=|\varphi\rangle\langle\varphi|$ the
corresponding projection operator. Then, for all $\lambda > 1$ the operator
$(1-\lambda)P + \lambda\rho$ has the negative eigenvalue $1-\lambda < 0$
with corresponding eigenvector $|\varphi\rangle$, and thus does not belong
to the state space.

For finite-dimensional systems the boundary $\partial\mathcal{S}(\mathcal{H})$ 
actually contains only states with zero eigenvalues. This can be seen by setting 
$\lambda=1+\epsilon$ with $\epsilon > 0$ in \eqref{eq:defbound} which yields 
the operator $\rho'=(1+\epsilon)(\rho-\epsilon(1+\epsilon)^{-1}\sigma)$. If all 
eigenvalues of $\rho$ are strictly positive, it follows from the continuity of the roots 
of the characteristic polynomial that for sufficiently small $\epsilon$ also all
eigenvalues of $\rho'$ are positive. This shows that states 
with strictly positive eigenvalues are in the interior $\tilde{\mathcal{S}}(\mathcal{H})
$ such that $\partial\mathcal{S}(\mathcal{H})=\{\rho\in\mathcal{S}(\mathcal{H})|0\in 
\mathrm{spec}(\rho)\}$ for finite-dimensional systems.

As has been mentioned already in the introduction a pair of states 
$\rho_1,\rho_2$ is said to be an optimal state pair if the maximum in 
Eq.~\eqref{eq:measure} is attained for this pair, i.e., if
\begin{equation}\label{eq:optimality}
 \mathcal{N}(\Phi) =  \int_{\sigma>0} dt ~\sigma(t,\rho_1,\rho_2)
\end{equation}
holds for this pair. If we speak of an optimal pair we will always assume that 
the quantum process under consideration is non-Markovian, i.e. 
$\mathcal{N}(\Phi)>0$, since otherwise any pair of quantum states would be
trivially optimal. Moreover, we note that the optimality property implies of course 
that $\rho_1\neq\rho_2$.

\begin{mythe}\label{the:boundary}
Let $\rho_1,\rho_2\in\mathcal{S}(\mathcal{H})$ be an optimal state pair. Then
both states lie on the boundary of the states space, 
$\rho_1,\rho_2\in\partial\mathcal{S}(\mathcal{H})$.
\end{mythe}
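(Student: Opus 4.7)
The plan is a proof by contradiction built on the convex structure of $\mathcal{S}(\mathcal{H})$ together with the linearity of $\Phi_t$. Suppose, toward a contradiction, that $(\rho_1,\rho_2)$ is optimal but that, say, $\rho_1 \in \tilde{\mathcal{S}}(\mathcal{H})$. Then the defining property \eqref{eq:defbound} applied with the choice $\sigma := \rho_2$ provides a number $\lambda > 1$ such that
\begin{equation}
 \rho_1' := (1-\lambda)\rho_2 + \lambda\rho_1 \in \mathcal{S}(\mathcal{H}).
\end{equation}
My goal is to show that the pair $(\rho_1',\rho_2)$ yields a strictly larger value of the functional in \eqref{eq:measure} than $(\rho_1,\rho_2)$, contradicting optimality.

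The key algebraic observation is the identity $\rho_1' - \rho_2 = \lambda(\rho_1 - \rho_2)$, which follows by direct expansion. Linearity of the CPT map $\Phi_t$ then gives
\begin{equation}
 \Phi_t(\rho_1') - \Phi_t(\rho_2) = \lambda\bigl[\Phi_t(\rho_1) - \Phi_t(\rho_2)\bigr],
\end{equation}
so taking trace norms yields $\mathcal{D}(\Phi_t(\rho_1'),\Phi_t(\rho_2)) = \lambda\,\mathcal{D}(\Phi_t(\rho_1),\Phi_t(\rho_2))$. Differentiating in $t$ gives $\sigma(t,\rho_1',\rho_2) = \lambda\,\sigma(t,\rho_1,\rho_2)$, and in particular the set $\{t:\sigma>0\}$ is the same for both pairs since $\lambda>0$. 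Therefore
\begin{equation}
 \int_{\sigma>0} dt\,\sigma(t,\rho_1',\rho_2) = \lambda\,\mathcal{N}(\Phi) > \mathcal{N}(\Phi),
\end{equation}
where the strict inequality combines $\lambda>1$ with the standing assumption $\mathcal{N}(\Phi)>0$. This contradicts the definition of $\mathcal{N}(\Phi)$ as the maximum over all state pairs, so $\rho_1 \in \partial\mathcal{S}(\mathcal{H})$. Exchanging the roles of $\rho_1$ and $\rho_2$ and repeating the argument gives $\rho_2 \in \partial\mathcal{S}(\mathcal{H})$ as well.

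I do not anticipate a real obstacle: the geometric content is simply that an interior state can be ``pushed farther away'' from its partner along the line through the two, while remaining physical, and that this push dilates the trace distance by a uniform factor $\lambda$ at every time. The only minor technical point to verify is that $\rho_1' \neq \rho_2$, so that the new pair is genuine and the integrand is meaningful; but $\rho_1' = \rho_2$ would force $\rho_1 = \rho_2$ via the above identity, which is incompatible with $\mathcal{N}(\Phi)>0$.
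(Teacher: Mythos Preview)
Your proof is correct and is essentially identical to the paper's own argument: assume one state is interior, push along the line through the pair to a new state $\rho_1'=(1-\lambda)\rho_2+\lambda\rho_1$ with $\lambda>1$, and use linearity of $\Phi_t$ to show the trace distance (and hence the integrated backflow) scales by the constant factor $\lambda$, contradicting optimality. The only cosmetic difference is that the paper supposes $\rho_2$ is interior and builds $\rho_3=(1-\lambda)\rho_1+\lambda\rho_2$, which is merely a relabeling of your construction.
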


\begin{proof}
Suppose that (at least) one state of the pair,
say $\rho_2$, does not belong to the boundary. Hence, $\rho_2$ is
an interior point and there exists $\lambda > 1$ such that 
\begin{equation}\label{rho3}
 \rho_3 = (1-\lambda)\rho_1 + \lambda\rho_2 \in \mathcal{S}(\mathcal{H})
\end{equation}
is a quantum state (see Fig.~\ref{fig:1}). The time evolution of the three states is 
given by $\rho_i(t) = \Phi_t(\rho_i)$, $i=1,2,3$. By the linearity of the dynamical 
map we have
\begin{equation}\label{rho3t}
 \rho_3(t) = (1-\lambda)\rho_1(t) + \lambda\rho_2(t)
\end{equation}
and, hence,
\begin{equation}
 \rho_1(t) - \rho_3(t) = \lambda (\rho_1(t)-\rho_2(t)).
\end{equation}
It follows that
\begin{equation}
 \mathcal{D}(\rho_1(t),\rho_3(t)) = \lambda \mathcal{D}(\rho_1(t),\rho_2(t)). 
\end{equation}
Note that $\lambda$ is a fixed number strictly larger than $1$. Thus, the last
equation tells us that the trace distance between $\rho_1(t)$ and $\rho_3(t)$
is always larger by the constant factor $\lambda$ than the trace distance between 
$\rho_1(t)$ and $\rho_2(t)$. This implies that the quantity
$\int_{\sigma>0} dt ~\sigma(t,\rho_1,\rho_3)$ is larger than
$\int_{\sigma>0} dt ~\sigma(t,\rho_1,\rho_2)$ by the same factor $\lambda$
(see Eq.~\eqref{eq:def-sigma}). It follows that $\rho_1,\rho_2$ cannot be an 
optimal pair which is a contradiction. Consequently, any optimal pair of states must 
belong to the boundary of the state space.
\end{proof}

\begin{figure}[tbh]
 \centering
 \includegraphics[width=0.25\textwidth]{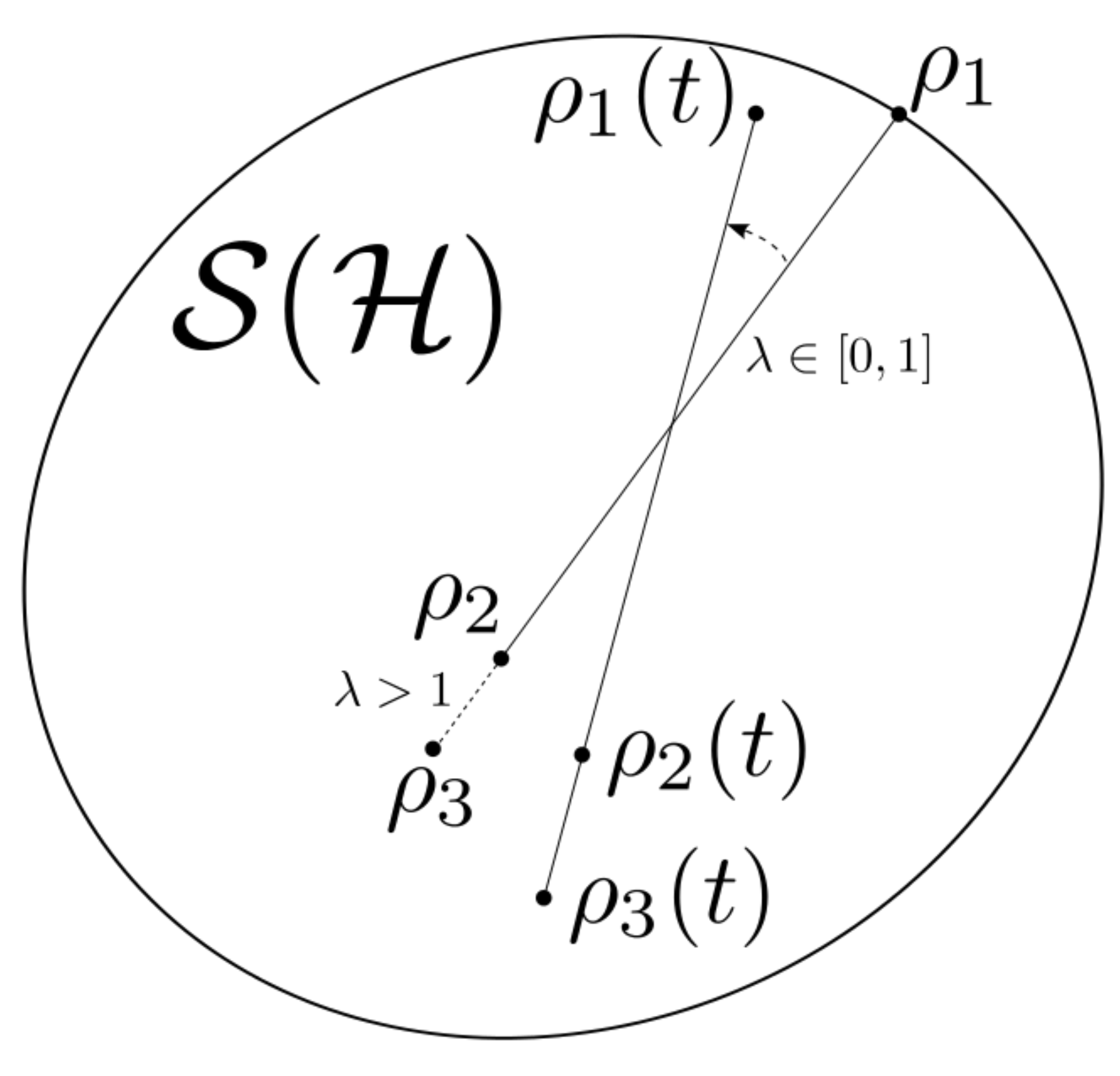}
 \caption{Illustration of the decomposition \eqref{rho3} and of the corresponding
 time evolution given by Eq.~\eqref{rho3t}.}\label{fig:1}
\end{figure}

Thus we see that the maximization over all initial state pairs in the definition 
\eqref{eq:measure} of the non-Markovianity measure $\mathcal{N}(\Phi)$ can be
restricted to the boundary $\partial\mathcal{S}(\mathcal{H})$ of the state space,
\begin{equation}
 \mathcal{N}(\Phi) = \max_{\rho_{1,2}\in \partial\mathcal{S}({\mathcal{H}}_S)} 
 \int_{\sigma>0} dt ~\sigma(t,\rho_1,\rho_2).
\end{equation}
We emphasize that this result is valid for any Hilbert space $\mathcal{H}$ and 
for any family of linear dynamical maps $\Phi_t$. 
The linearity guarantees the invariance of the decomposition defined in 
\eqref{rho3} for all times $t$ (see Fig.~\ref{fig:1}).

\subsection{Orthogonality of optimal state pairs}\label{Orthogonality}

In this section we demonstrate that optimal state pairs must be 
orthogonal which strengthens the result of Sec.~\ref{sec:boundary}. We recall that
two quantum states $\rho_1$ and $\rho_2$ are said to be orthogonal, 
$\rho_1\perp\rho_2$, if and only if their supports are orthogonal, 
$\mathrm{supp}(\rho_1)\perp\mathrm{supp}(\rho_2)$, where the support is
defined as the subspace spanned by the eigenvectors with nonzero eigenvalues.
Note that the orthogonality of two states implies that both states have a zero
eigenvalue and, hence, that both states belong to the boundary of the state space.
We also mention that the trace distance between any two quantum states
satisfies $\mathcal{D}(\rho_1,\rho_2) \leq 1$, where the equality sign holds
if and only if $\rho_1\perp\rho_2$. Orthogonality is thus equivalent to unit
trace distance \cite{Hayashi,Nielsen}.

\begin{mythe}\label{the:orthogonality}
Optimal state pairs are orthogonal.
\end{mythe}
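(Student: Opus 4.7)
The plan is to argue by contradiction: assume the optimal pair $\rho_1,\rho_2$ is not orthogonal and construct from it a strictly better pair, just as the proof of Theorem~\ref{the:boundary} produced a better pair from a non-boundary state. The geometric intuition is that the ``direction'' $\rho_1-\rho_2$ in the real vector space of self-adjoint trace-class operators can be rescaled to reach farther across the state space when the endpoints are not maximally distinguishable, and the linearity of $\Phi_t$ will translate this rescaling directly into an amplification of $\sigma(t,\cdot,\cdot)$.

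Concretely, I would use the Jordan (Hahn) decomposition of the self-adjoint trace-class operator
\begin{equation}
 \rho_1-\rho_2 = P - Q, \qquad P,Q\geq 0, \quad PQ=0,
\end{equation}
and set $D\equiv\mathcal{D}(\rho_1,\rho_2)=\tfrac{1}{2}\mathrm{Tr}|\rho_1-\rho_2|$. Because $\mathrm{Tr}(\rho_1-\rho_2)=0$, one has $\mathrm{Tr}P=\mathrm{Tr}Q=D$, and since $\rho_1\neq\rho_2$ it follows that $D>0$. Define
\begin{equation}
 \tilde\rho_1 = P/D, \qquad \tilde\rho_2 = Q/D.
\end{equation}
Then $\tilde\rho_1,\tilde\rho_2\in\mathcal{S}(\mathcal{H})$ with orthogonal supports, and by construction
\begin{equation}
 \tilde\rho_1 - \tilde\rho_2 = \frac{1}{D}(\rho_1-\rho_2).
\end{equation}

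From here linearity of $\Phi_t$ does all the remaining work: $\Phi_t(\tilde\rho_1)-\Phi_t(\tilde\rho_2)=\tfrac{1}{D}\bigl(\Phi_t(\rho_1)-\Phi_t(\rho_2)\bigr)$, hence $\mathcal{D}(\Phi_t(\tilde\rho_1),\Phi_t(\tilde\rho_2))=\tfrac{1}{D}\mathcal{D}(\Phi_t(\rho_1),\Phi_t(\rho_2))$ for every $t$, and therefore $\sigma(t,\tilde\rho_1,\tilde\rho_2)=\tfrac{1}{D}\sigma(t,\rho_1,\rho_2)$. Integrating over the region where $\sigma>0$ yields
\begin{equation}
 \int_{\sigma>0}\!dt\;\sigma(t,\tilde\rho_1,\tilde\rho_2) = \frac{1}{D}\,\mathcal{N}(\Phi).
\end{equation}
If $\rho_1,\rho_2$ were not orthogonal, then $D<1$, so the right-hand side would strictly exceed $\mathcal{N}(\Phi)$, contradicting the definition of the non-Markovianity measure (and using that $\mathcal{N}(\Phi)>0$ is guaranteed by the standing non-Markovianity assumption). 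Hence $D=1$, i.e.\ $\rho_1\perp\rho_2$.

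The only subtle point I foresee is the availability of the Jordan decomposition and the identity $\mathrm{Tr}P=\mathrm{Tr}Q=D$ in infinite dimensions; but $\rho_1-\rho_2$ is self-adjoint and trace class, so its spectral decomposition yields positive and negative parts with finite traces summing to zero, and the identity follows. Everything else is simply linearity of $\Phi_t$ and the definition of the trace distance, so the core of the argument is really the same ``rescaling along a chord'' idea used in Theorem~\ref{the:boundary}, applied now with the two orthogonal components of $\rho_1-\rho_2$ playing the role of $\rho_1,\rho_3$ there.
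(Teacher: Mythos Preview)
Your argument is correct and essentially identical to the paper's own proof: both use the Jordan--Hahn decomposition $\rho_1-\rho_2=P-Q$, normalize the positive and negative parts by their common trace $\lambda=D=\mathcal{D}(\rho_1,\rho_2)$ to obtain an orthogonal pair of states, and then invoke linearity of $\Phi_t$ to scale the trace-distance evolution by $1/D>1$, contradicting optimality. The only cosmetic difference is notation ($P,Q,D,\tilde\rho_i$ versus $P_1,P_2,\lambda,\sigma_i$).
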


\begin{proof}
Let $\rho_1,\rho_2\in\mathcal{S}(\mathcal{H})$ be an optimal pair of states, and
suppose that the states are not orthogonal, $\rho_1\not\perp\rho_2$. According
to the Jordan-Hahn decomposition \cite{Nielsen} there exist positive and 
orthogonal operators 
$P_1$ and $P_2$ such that
\begin{equation}\label{J-H-Decomp}
 \rho_1 - \rho_2 = P_1 - P_2.
\end{equation}
Taking the trace of this equation we see that 
$\lambda \equiv \mathrm{Tr} P_1 = \mathrm{Tr} P_2$. From $\rho_1\neq\rho_2$
we conclude that $\lambda > 0$. Using Eq.~\eqref{J-H-Decomp}, the 
orthogonality of $P_1$ and $P_2$, and the fact that the trace distance
between nonorthogonal states is always strictly smaller than $1$, we find
\begin{equation}
 1 > \mathcal{D}(\rho_1,\rho_2) = \mathcal{D}(P_1,P_2)
 = \frac{1}{2} \left( \mathrm{Tr} P_1 + \mathrm{Tr} P_2 \right) = \lambda.
\end{equation}
Thus, we have $0 < \lambda < 1$. Now we define the operators
$\sigma_1 = P_1/\lambda$ and $\sigma_2 = P_2/\lambda$.
Being positive and of unit trace, these operators represent quantum states.
Moreover we have
\begin{equation}
 \sigma_1 - \sigma_2 = \frac{1}{\lambda} (\rho_1 - \rho_2).
\end{equation}
By use of the linearity of the dynamical maps we obtain
\begin{equation}
 \sigma_1(t) - \sigma_2(t) = \frac{1}{\lambda} (\rho_1(t) - \rho_2(t)),
\end{equation}
from which it follows that
\begin{equation}
 \mathcal{D}(\sigma_1(t),\sigma_2(t)) 
 = \frac{1}{\lambda} \mathcal{D}(\rho_1(t),\rho_2(t)).
\end{equation}
Since $\lambda^{-1}>1$ we can conclude from the last equation that the pair
$\sigma_1,\sigma_2$ yields a non-Markovianity which is strictly larger than
that of the pair $\rho_1,\rho_2$, which contradicts the assumption that
$\rho_1,\rho_2$ is an optimal pair. Hence, $\rho_1$ and $\rho_2$ must be
orthogonal.
\end{proof}

It follows from theorem \ref{the:orthogonality} that the maximization in the 
definition \eqref{eq:measure} of the measure $\mathcal{N}(\Phi)$ for quantum 
non-Markovianity can be restricted to orthogonal initial state pairs. Again, this 
result holds for any Hilbert space $\mathcal{H}$ and any family of linear dynamical 
maps $\Phi_t$.

\subsection{Infinite dimensional state spaces}\label{Infinite}

It is important to emphasize that for infinite dimensional spaces an optimal state 
pair need not necessarily exist. To be mathematically more precise, the maximum 
in the definition \eqref{eq:measure} for the non-Markovianity measure should be 
replaced by the supremum to account for such cases. 

In theorems \ref{the:boundary} and \ref{the:orthogonality} we have assumed that 
an optimal pair of states does exist. If there is no such pair
there is a sequence of state pairs $\rho^n_1,\rho^n_2$ such that 
$\int_{\sigma>0} dt ~\sigma(t,\rho^n_1,\rho^n_2)$ converges to 
$\mathcal{N}(\Phi)$ for $n \rightarrow \infty$. Employing the construction used 
in the proof of theorem \ref{the:orthogonality} one can show that the pairs
$\rho^n_1,\rho^n_2$ can always be taken to be orthogonal. This means that
the non-Markovianity measure can be approximated with arbitrary precision by 
orthogonal state pairs and that we can write
\begin{equation}
 \mathcal{N}(\Phi) = \sup_{\rho_1\perp\rho_2} 
 \int_{\sigma>0} dt ~\sigma(t,\rho_1,\rho_2).
\end{equation}

\subsection{Orthogonality and parallel translations}\label{translatable}

In this section we present an alternative proof for the orthogonality of optimal
state pairs, where we restrict ourselves to finite dimensional Hilbert spaces,
$\mathrm{dim}\,\mathcal{H}=N$. This proof could be of interest also in other 
contexts since it relies on the behavior of pairs of states under parallel 
translations.

The idea of the proof is based on the observation gained from two-level systems 
that nonorthogonal states on the boundary of the state space can be 
simultaneously translated by a traceless Hermitian operator to yield a pair of mixed 
states, while the trace distance of the pair is invariant under such translations. To 
make this idea more precise we define 
\begin{equation}
 \mathcal{E}(\mathcal{H}) =
 \{ A \mid A\neq0, \, A=A^\dag, \, \mathrm{Tr}A=0 \}
\end{equation}
to be the set of nonzero, Hermitian and traceless operators on $\mathcal{H}$.

\begin{mydef}\label{def1:trans}
Two states $\rho_1,\rho_2\in\mathcal{S}(\mathcal{H})$ are called jointly 
translatable if and only if there exists an $A\in\mathcal{E}(\mathcal{H})$
such that $\rho_k-A\in\mathcal{S}(\mathcal{H})$ for $k=1,2$.
\end{mydef}

Hence, two states are said to be jointly translatable if and only if there is
a nontrivial Hermitian and traceless operator $A$ which can be subtracted from
the states without leaving the state space. We prove that any pair of 
nonorthogonal states is jointly translatable in such a way that both translated 
states do not belong to the boundary of the state space.

\begin{mythe}\label{the:trans}
If $\rho_1,\rho_2\in\mathcal{S}(\mathcal{H})$ and $\rho_1\not\perp\rho_2$ then  
$\rho_1,\rho_2$ are jointly translatable. Moreover, there exists an operator
$A\in\mathcal{E}(\mathcal{H})$ such that 
$\rho_k-A\notin\partial\mathcal{S}(\mathcal{H})$ for $k=1,2$.
\end{mythe}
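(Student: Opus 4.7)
The plan is to exploit the Jordan--Hahn decomposition of $\rho_1-\rho_2$, which already appeared in the proof of Theorem~\ref{the:orthogonality}, and to combine its positive and negative parts with a common full-rank ``filler'' state. The non-orthogonality hypothesis $\rho_1\not\perp\rho_2$ enters through the equivalence $\lambda\equiv\mathcal{D}(\rho_1,\rho_2)<1$, so that the coefficient $1-\lambda$ appearing below is strictly positive; this is the single place in the argument where the hypothesis is essential.

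First I would write $\rho_1-\rho_2=P_+-P_-$ with $P_\pm\geq 0$, $P_+P_-=0$ and $\mathrm{Tr}(P_\pm)=\lambda$. Then, fixing any interior state $\eta\in\tilde{\mathcal{S}}(\mathcal{H})$ (for concreteness, the maximally mixed state $\eta=\mathbb{1}/N$), I would define the candidate target states
\begin{equation*}
   \tau_1=P_++(1-\lambda)\eta,\qquad
   \tau_2=P_-+(1-\lambda)\eta.
\end{equation*}
A short verification shows $\mathrm{Tr}(\tau_k)=\lambda+(1-\lambda)=1$, and that $\tau_k\geq(1-\lambda)\eta$ is a strictly positive operator, so each $\tau_k$ lies in the interior $\tilde{\mathcal{S}}(\mathcal{H})$. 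The crucial design identity
\begin{equation*}
  \tau_1-\tau_2=P_+-P_-=\rho_1-\rho_2
\end{equation*}
then forces $A:=\rho_1-\tau_1=\rho_2-\tau_2$ to be well-defined; it is Hermitian because both $\rho_k$ and $\tau_k$ are Hermitian, and traceless because $\mathrm{Tr}(A)=\mathrm{Tr}(\rho_k)-\mathrm{Tr}(\tau_k)=0$. By construction $\rho_k-A=\tau_k\notin\partial\mathcal{S}(\mathcal{H})$, which is the ``moreover'' clause, and the bare joint translatability then follows a fortiori.

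The only delicate step, and the one I expect to be the main obstacle, is verifying that $A\neq 0$ so that $A\in\mathcal{E}(\mathcal{H})$. Rewriting $A=(\rho_1-P_+)-(1-\lambda)\eta$, the equation $A=0$ pins $\eta$ down to the unique Hermitian operator $(\rho_1-P_+)/(1-\lambda)$. If this operator fails to be a state (i.e.\ is not positive) then $A\neq 0$ automatically for any choice of $\eta$; if it happens to coincide with our chosen interior $\eta$, we simply replace $\eta$ by any other interior state, which is always possible because $\tilde{\mathcal{S}}(\mathcal{H})$ is an open set containing uncountably many states whenever $\dim\mathcal{H}\geq 2$ (the case $\dim\mathcal{H}=1$ being vacuous since there are no two distinct states). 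Thus the freedom in choosing $\eta$ easily dominates the single linear obstruction $A=0$, completing the argument.
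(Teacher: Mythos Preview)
Your argument is correct and considerably more economical than the paper's. The paper proceeds by an explicit spectral construction: it picks overlapping eigenvectors $|\psi_1^{(1)}\rangle,|\psi_1^{(2)}\rangle$ of $\rho_1,\rho_2$ (guaranteed by $\rho_1\not\perp\rho_2$), forms the superpositions $|\psi_\pm\rangle$, and builds a specific traceless Hermitian operator $A_\epsilon=\epsilon B$ out of the projections $P_\pm$ and the identity. Positivity of $\rho_k-A_\epsilon$ is then verified by bounding $\langle\chi|\hat\rho_k|\chi\rangle$ from below via a quadratic function $g_\epsilon$ and choosing $\epsilon$ small enough. Your route instead recycles the Jordan--Hahn decomposition from Theorem~\ref{the:orthogonality}: since $\lambda=\mathcal{D}(\rho_1,\rho_2)<1$, the states $\tau_k=P_\pm+(1-\lambda)\eta$ are automatically strictly positive, and the single equation $\tau_1-\tau_2=\rho_1-\rho_2$ forces a common shift $A$. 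This avoids all the estimates with $c_\pm$, $\alpha$, and $g_\epsilon$, and makes transparent why non-orthogonality is exactly the right hypothesis (it is equivalent to $1-\lambda>0$). What the paper's approach buys is an explicit formula for $A$ in terms of the spectral data of $\rho_1,\rho_2$, and it may be more easily adapted if one later wants quantitative control over how far into the interior the translated states land; your approach buys brevity, conceptual clarity, and a direct link to the mechanism already used in Theorem~\ref{the:orthogonality}. One small remark: your parenthetical about $\dim\mathcal{H}=1$ is slightly off---in that case the theorem is not vacuous but actually fails, since $\mathcal{E}(\mathcal{H})$ is empty; both proofs tacitly assume $N\geq 2$.
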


\begin{proof}
Let $\rho_{1,2}$ be given in terms of their spectral decomposition,
\begin{equation}
 \rho_k=\sum_{i=1}^N p_i^{(k)}|\psi_i^{(k)}\rangle\langle\psi_i^{(k)}|,
\end{equation}
where $p_i^{(k)}$, $ i=1,\dots,N$, denotes the eigenvalues and 
$|\psi_i^{(k)}\rangle$ the corresponding eigenvectors of $\rho_k$. The
assumption $\rho_1\not\perp\rho_2$ implies that $\rho_1$ and $\rho_2$ have at 
least one eigenvector with nonzero eigenvalue which are not orthogonal. After 
possible relabeling one can assume that 
$\alpha\equiv\langle\psi_1^{(1)}|\psi_1^{(2)}\rangle\neq0$ and 
$p_1^{(1)},p_1^{(2)}\neq0$ holds. Moreover, by an appropriate choice of the 
phases of the eigenstates one can assume without restriction that $\alpha$ is real 
and positive, i.e.~$0<\alpha\leq1$. We consider the following superpositions of the 
two overlapping eigenvectors,
\begin{equation}
 |\psi_{\pm}\rangle=c_\pm\left(|\psi_1^{(1)}\rangle\pm|\psi_1^{(2)}\rangle\right),
\end{equation}
where the normalization constants obey $c_\pm^{-1}=\sqrt{2(1\pm \alpha)}$. The
projections onto these states are denoted by $P_\pm$. By positivity of the 
overlap $\alpha$ one has $c_+<c_-$. Now, we define
\begin{align}
 A_\epsilon&\equiv~\epsilon \cdot B~,\nonumber\\
B&=P_{+}-\Bigl(\frac{c_+}{c_-}\Bigr)^{2} P_{-}-\Bigl[1-\Bigl(\frac{c_+}{c_-}\Bigr)^
{2}\Bigr]\cdot \frac{1}{N}\mathbb{1}_N, \label{eq:shiftop}
\end{align}
where $\epsilon > 0$ is a real number to be chosen later.
Clearly, $B^\dag=B$, $B\neq0$ and $\mathrm{Tr}B =0$ so that $A_\epsilon$ is a 
candidate for the traceless hermitian operator which simultaneously shifts both 
states. Furthermore, as $c_+<c_-$ one recognizes that the two last terms are 
negative semidefinite while the first one is positive semidefinite. In order to show 
positivity of $\hat\rho_k\equiv\rho_k-A_\epsilon$ for an appropriate choice of 
$\epsilon$ we consider the quantity $\langle\chi|\hat\rho_k|\chi\rangle$
for an arbitrary normalized vector $|\chi\rangle\in\mathcal{H}$. We find
\begin{align}\label{eq:ortho1}
 \langle\chi|\hat\rho_k|\chi\rangle=& B_k(|\chi\rangle)+p_1^{(k)}\bigl|\langle\chi|
 \psi_1^{(k)}\rangle\bigr|^2-\epsilon\langle\chi|P_+|\chi\rangle\nonumber\\
 &+\epsilon\Bigl(\frac{c_+}{c_-}\Bigr)^{2}\langle\chi|P_{-}|\chi\rangle+\frac{\epsilon}
 {N}\Bigl[1-\Bigl(\frac{c_+}{c_-}\Bigr)^{2}\Bigr]\nonumber\\
 =&~B_k(|\chi\rangle)~+p_1^{(k)}\bigl|\langle\chi|\psi_1^{(k)}\rangle\bigr|^2
 \nonumber\\
 &+4c_+^2\epsilon\Bigl\{\frac{\alpha}{N}-\mathrm{Re}\bigl(\langle\psi_1^{(1)}|\chi 
 \rangle~ \langle\chi|\psi_1^{(2)}\rangle\bigr)\Bigr\}\nonumber\\
 \geq&~B_k(|\chi\rangle)~+p_1^{(k)}\bigl|\langle\chi|\psi_1^{(k)}\rangle\bigr|^2
 \nonumber\\
 &+4c_+^2\epsilon\Bigl\{\frac{\alpha}{N}-|\langle\chi|\psi_1^{(k)}\rangle|\Bigr\}~,
\end{align}
where $~B_k(|\chi\rangle)\equiv\sum_{i=2}^N p_i^{(k)}\bigl|\langle\chi|\psi_i^{(k)}
\rangle\bigr|^2\geq0$ and since $c_+^{-2}-c_-^{-2}=4\alpha$ by definition of $c_
\pm$. In the last step one uses the fact that $\mathrm{Re}(z)\leq|z|$ for all 
complex numbers $z$, and that $|\langle\psi_1^{(m)}|\chi\rangle|\leq 1$ for all 
normalized vectors by the Cauchy-Schwarz inequality. Consider now the 
quadratic function
\begin{equation}\label{eq:ortho4}
 g_\epsilon(x)=p_1^{(k)}x^2+4c_+^2\epsilon\Bigl\{\frac{\alpha}{N}-x\Bigr\}.
\end{equation}
One can show that this function is strictly positive for $k=1,2$ if $\epsilon$ satisfies
\begin{equation}\label{eq:ortho7}
 0 < \epsilon < \frac{\alpha}{Nc_+^2}\min_{k=1,2}p_1^{(k)}.
\end{equation}
Thus, if we choose any $\epsilon$ satisfying \eqref{eq:ortho7} we obtain 
$g_\epsilon(|\langle\chi|\psi_1^{(k)}\rangle|)>0$ for all normalized vectors 
$|\chi\rangle\in\mathcal{H}$ and $k=1,2$ which leads to
$\langle\chi|\hat\rho_{1,2}|\chi\rangle>0$. This demonstrates that $\hat\rho_{1,2}$ 
is positive and has no zero eigenvalue, i.e. 
$\hat\rho_{1,2}\in\tilde{\mathcal{S}}(\mathcal{H})$ due to the characterization of 
the boundary in terms of the eigenvalues given in Sec. \ref{sec:boundary}.  In 
particular, $\rho_1$ and $\rho_2$ are jointly translatable. 
\end{proof}

We remark that the converse of the statement of theorem \ref{the:trans} holds, too. 
Thus, two states are jointly translatable if and only if they are not orthogonal.

Theorem \ref{the:orthogonality} can now be proven as follows.
Let $\rho_1,\rho_2$ be an optimal pair, and 
suppose that $\rho_1\not\perp\rho_2$. Then by theorem \ref{the:trans} there exists
an operator $A\in\mathcal{E}(\mathcal{H})$ such that 
$\hat\rho_k\equiv\rho_k-A\notin\partial\mathcal{S}(\mathcal{H_S})$. Applying 
theorem \ref{the:boundary} we conclude that the states $\hat\rho_k$ are not
optimal. Since $\rho_1-\rho_2=\hat\rho_1-\hat\rho_2$ it follows that also the
states $\rho_k$ are not optimal, which represents a contradiction. Therefore, the 
optimal pair has to be orthogonal.

\section{Purity of optimal pairs}\label{sec:counterex1}

As a simple application of theorem \ref{the:orthogonality} one obtains the result 
that for all non-Markovian quantum processes of a two-dimensional system (qubit) 
the maximal backflow of information occurs for a pair of pure, orthogonal initial 
states, corresponding to antipodal points on the surface of the Bloch sphere. This 
follows immediately from the fact that for qubits the set of pure states is identical to 
the boundary $\partial\mathcal{S}(\mathcal{H})$ of the state space. For 
higher-dimensional systems this is no longer true, i.e., the set of pure states 
represents a proper subset of the boundary in this case. In this section we will 
construct an explicit example for an open system dynamics in a three-dimensional 
Hilbert space for which the optimal pair is not a pair of pure states.

We consider a $\Lambda$-system which interacts with an off-resonant cavity field. 
The weak-coupling master equation of this model is given by \cite{MeasurePaper}
\begin{eqnarray} \label{me}
\frac{d}{dt}\rho(t)&=&-i\lambda_1(t)[|a\rangle\langle a|,\rho(t)]-i\lambda_2(t)[|a
\rangle\langle a|,\rho(t)]\nonumber\\
&+&\gamma_1(t)\Big[|b\rangle\langle a|\rho(t)|a\rangle\langle b|-\frac{1}{2}\{\rho
(t),|a\rangle\langle a|\}\Big]\nonumber\\
&+&\gamma_2(t)\Big[|c\rangle\langle a|\rho(t)|a\rangle\langle c|-\frac{1}{2}\{\rho(t),|
a\rangle\langle a|\}\Big]\label{eq:6},
\end{eqnarray}
where $|a\rangle$ refers to the excited and $|b\rangle,|c\rangle$ to the two ground 
states. The coefficients $\lambda_{1,2}(t)$ and $\gamma_{1,2}(t)$ are determined
by the spectral density of the cavity field. Introducing the functions
\begin{eqnarray}
 f(t) &=& e^{-\bigl(D_1(t)+D_2(t)\bigr)/2}e^{-i\bigl(L_1(t)+L_2(t)\bigr)}, \\
 g_i(t) &=& \int_0^t ds \gamma_i(s)e^{-\bigl(D_1(s)+D_2(s)\bigr)},
\end{eqnarray}
where
\begin{equation}
 D_i(t) = \int_0^t ds \gamma_i(s), \qquad
 L_i(t) = \int_0^t ds \lambda_i(s),
\end{equation}
we find that the solution of the master equation yields the dynamical map
\begin{equation}
\Phi_t^\Lambda(\rho)= \begin{pmatrix}
				      |f(t)|^2\rho_{aa}	&	f(t)\rho_{ab}			&	f(t)
				      \rho_{ac}	\\
				      f(t)^*\rho_{ab}^*	&	g_1(t)\rho_{aa}+\rho_{bb}	&	
				      \rho_{bc}	\\
				      f(t)^*\rho_{ac}^*	&	\rho_{bc}^*			&	
				      g_2(t)\rho_{aa}+\rho_{cc}
			   \end{pmatrix}.
\end{equation}
The functions $f$, $g_1$ and $g_2$ have to obey the following relations 
which guarantee that $\Phi_t^\Lambda$ is trace preserving and completely 
positive,
\begin{eqnarray}
 && g_1(t)+g_2(t)+|f(t)|^2=1 \label{eq:CP},\\
 && g_{1,2}(t) \geq 0 \label{eq:T}.
\end{eqnarray}

We consider now, for simplicity, the case where the Lamb-shifts $\lambda_i$ 
are equal to zero, while the decay rates $\gamma_i$ in the dissipator of the 
master equation (\ref{me}) are chosen to obey one period of small oscillation,
\begin{equation}
 \lambda_{1,2}(t) = 0, \qquad \gamma_{1,2}(t) = 0.03 \cdot \sin(t).
\end{equation}
It is easy to check that for this choice the conditions \eqref{eq:CP} and 
\eqref{eq:T} are satisfied. We have carried out numerical simulations,
drawing random pairs of pure, orthogonal initial states and determining the 
corresponding increase of the trace distance for each initial pair. The results are 
shown in Fig.~\ref{fig:distrPure2}. We see from the figure that there is a finite
gap between the maximal possible increase of the trace distance for pure,
orthogonal initial pairs, and the increase of the trace distance corresponding
to the initial pair
\begin{equation} \label{mpair}
	\rho_1 = |a\rangle\langle a|, \qquad
	\rho_2 = \frac{1}{2}\big(|b\rangle\langle b| + |c\rangle\langle c|\big),
\end{equation} 
which consists of the excited state and the uniform mixture of the two ground 
states (indicated by the arrow in Fig.~\ref{fig:distrPure2}). Thus, this example 
clearly demonstrates that for Hilbert space dimensions larger than two, the optimal 
initial state pair can indeed contain a mixed state for certain non-Markovian 
dynamical maps \cite{note1}. 

\begin{figure}[tbh]
\centering  
\includegraphics[width=0.35\textwidth]{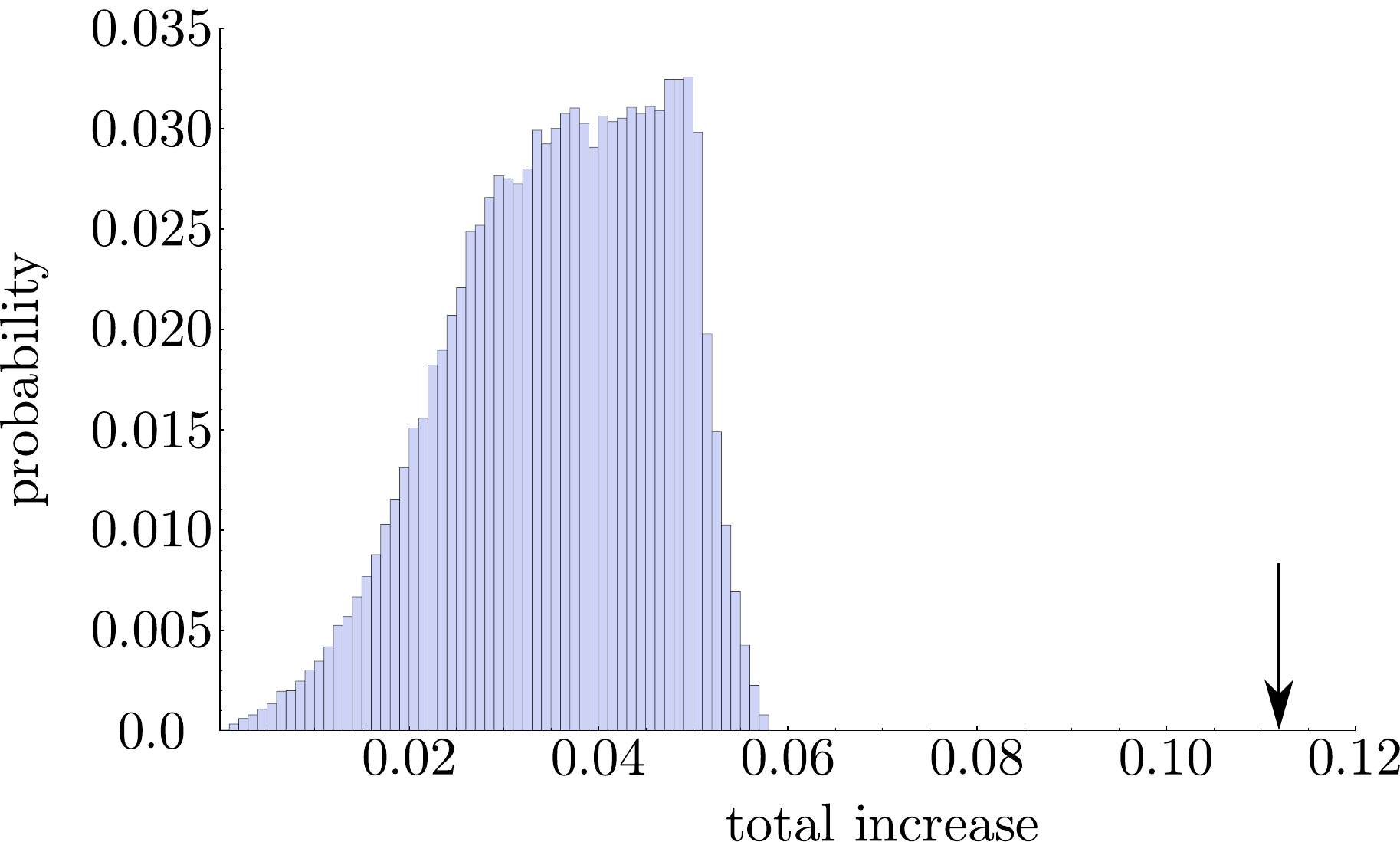}
\caption{(color online) Probability of the total increase of the trace distance for 
randomly drawn pairs of pure orthogonal states for one period of oscillation of the 
decay rates. The arrow indicates the value of the maximal increase of the trace 
distance for the pair of states given in Eq.~(\ref{mpair}) including a mixed state. 
The sample size of pairs of pure orthogonal states is equal to $10^5$.}
\label{fig:distrPure2}
\end{figure}

\section{Conclusion}\label{sec:conclusion}

In this paper we have shown that optimal pairs of initial states for 
non-Markovian quantum dynamics must be orthogonal which implies that these 
states can be distinguished with probability $1$ by a single measurement. Optimal 
state pairs thus have the maximal possible amount of initial information and are
therefore capable of emitting and reabsorbing the maximal amount of information
during the non-Markovian dynamics. 

We emphasize that the proof of these statements only relies on the convexity of 
the state space and on the linearity of the dynamical maps. Our results are thus
very general and can be applied to any quantum process describable by
a family of linear dynamical maps $\Phi_t$ in any Hilbert space. Strictly speaking, 
even the complete positivity of the dynamics is not needed. In fact, it suffices to 
assume that the maps $\Phi_t$ are positive since even trace-preserving positive 
maps are contractions for the trace distance \cite{Ruskai}. 

We have further demonstrated that for Hilbert spaces with dimensions of at least
$3$ optimal state pairs need not consist of pure states, in contrast to the case
of a qubit where optimal pairs are always antipodal points on the Bloch sphere, 
and as such pure. The example constructed here leads to an optimal pair 
consisting of a pure and a mixed quantum state. We conjecture that in Hilbert 
spaces of dimension $4$ and higher one can also construct quantum processes 
for which both states of the optimal pair are true mixtures. Finally, we mention that 
on the basis of the present results further, more specific statements could be 
proven if one assumes additional properties of the quantum process, such as 
invariance under certain symmetry groups, or the existence of invariant states. In 
addition, the notion of joint translatability provides new insights into the structure of 
the state space and might serve for further applications.

\acknowledgments
We gratefully acknowledge financial support by the German Academic Exchange 
Service (DAAD), the Magnus Ehrnrooth Foundation, the Jenny and Antti Wihuri 
Foundation, the Graduate School of Modern Optics and 
Photonics, and the Academy of Finland (mobility from Finland 259827). SW and
HPB would like to thank Bruno Leggio for helpful discussions.

\end{document}